\documentclass[11pt]{amsart}

\usepackage{amsmath,amssymb,amsfonts,amsthm}
\usepackage{mathtools}
\usepackage{bm}
\usepackage[colorlinks=true,linkcolor=blue,citecolor=blue,urlcolor=blue]{hyperref}
\usepackage[numbers,sort&compress]{natbib}
\usepackage{geometry}
\newtheorem{theorem}{Theorem}[section]
\newtheorem{assumption}[theorem]{Assumption}

\newtheorem{corollary}[theorem]{Corollary}
\newtheorem{definition}[theorem]{Definition}
\theoremstyle{remark}
\newtheorem{remark}[theorem]{Remark}

\newcommand{\R}{\mathbb{R}}
\newcommand{\E}{\mathbb{E}}

\keywords{%
  Fisher information, entropy power inequality,
  Blahut--Arimoto algorithm, rate-distortion theory,
  translation symmetry, Rayleigh quotient,
  temperature invariance, partition function%
}

\subjclass[2020]{%
  94A17 (Information measures, entropy),
  94A34 (Rate-distortion theory),
  49Q22 (Optimal transportation, Wasserstein space),
  62B10 (Statistical information theory)%
}

\title[Translation Symmetry, Fisher Information, and EPI in
  Blahut--Arimoto Geometry]{%
  Translation Symmetry, Fisher Information, and the
  Entropy Power Inequality in
  Blahut--Arimoto Geometry%
}
\author{%
  Qiao Wang%
}
\address{%
  School of Information Science and Engineering, and School of Economics
  and Management, Southeast University, Nanjing, China%
}
\email{qiaowang@seu.edu.cn}
\date{2026}

\begin{document}

\begin{abstract}
We identify a structure in the finite-temperature geometry of
Blahut--Arimoto (BA) rate-distortion optimisation that has not
previously been recognised.

The structure originates from a single exact identity:
for every source density $p$ and every inverse temperature $\beta > 0$,
the BA partition function $Z(x) = \int q^*(y)\,e^{-\beta\|x-y\|^2}dy$
satisfies
\[
{Z(x) = \left(\frac{\pi}{\beta}\right)^{d/2} p(x).}
\]
This \emph{BA partition identity}, proved via Fourier analysis of the
fixed-point equation, implies two immediate geometric facts.
First, the BA effective score $g_\beta = -\nabla\log Z$ equals the
classical Fisher score $s = -\nabla\log p$ exactly, for all $\beta$.
Second, the BA projection of the translation mode $v = -\nabla\log q^*$
onto the source space satisfies $\mathcal{P}v = -s$: the score function
is the projection of a distinguished geometric object, the mode generated
by the translation symmetry of the quadratic distortion.

These two facts together yield the central identity of this paper:
\[
{J(p) = \mathcal{R}(v) := \langle v,\, \mathcal{G}\,v\rangle_{L^2(q^*)}.}
\]
Fisher information equals the Rayleigh quotient of the translation mode
against the BA relaxation kernel.  This identity holds exactly for all
$\beta > 0$ and all sources: Fisher information is a
\emph{temperature-invariant spectral quantity} in the BA framework.

As a consequence, the classical Fisher information inequality
$J(X{+}Y)^{-1} \ge J(X)^{-1} + J(Y)^{-1}$ acquires a geometric
interpretation---it is the parallel-combination law for a Rayleigh
quotient under convolution---and the entropy power inequality follows
by standard integration along the heat flow.

The contribution is not a new proof of the entropy power inequality.
It is the identification of a previously unrecognised structure:
Fisher information as the spectral charge of the translation mode in
rate-distortion geometry, and the entropy power inequality as a
consequence of this temperature-invariant fact.
\end{abstract}

\maketitle

% ============================================================
\section{Introduction}
\label{sec:intro}

\subsection{The Question behind the Proof}

The entropy power inequality (EPI)~\cite{Stam1959, Shannon1948}
\begin{equation}\label{eq:EPI}
e^{\frac{2}{d}h(X+Y)} \ge e^{\frac{2}{d}h(X)} + e^{\frac{2}{d}h(Y)}
\end{equation}
has been proved by many methods: via the heat equation and Fisher
information~\cite{Stam1959, Blachman1965}, via the I-MMSE
relation~\cite{VerduGuo2006}, via optimal transport~\cite{Courtade2016},
and via direct Fourier-analytic arguments~\cite{Rioul2011}.
All these proofs are complete and correct.

Yet none of them answers the deeper question:
\textit{why} does Fisher information---defined as the $L^2$ norm of a
score function---obey a parallel-combination law under convolution?
The standard proof shows \textit{that} it does, through an elegant
conditional-expectation argument~\cite{Blachman1965}.
But the Fisher information appears in that argument as a scalar
attached to a density, not as an intrinsic geometric quantity with
a variational origin.

This paper shows that the Blahut--Arimoto (BA) theory of rate-distortion
optimisation~\cite{Blahut1972, Arimoto1972, Berger1971} provides
exactly such an origin.
The Fisher information of a source is the Rayleigh quotient of a
specific geometric object in the BA variational problem: the mode
generated by the translation symmetry of the quadratic distortion.
The parallel-combination law is then the statement that this Rayleigh
quotient behaves correctly under convolution---a fact that follows from
the projection geometry of conditional expectation and requires no
additional argument.

\subsection{The Single Key Identity}

The BA theory at inverse temperature $\beta$ introduces an optimal
reconstruction density $q^*$ and a partition function
\[
Z(x) = \int q^*(y)\,e^{-\beta\|x-y\|^2}\,dy.
\]
These objects are defined entirely within the optimisation framework,
with no a priori connection to the score function of $p$.

The key discovery of this paper is that this connection exists, and
is exact.

\begin{center}
\fbox{\parbox{0.82\textwidth}{
\textbf{BA Partition Identity} (Theorem~\ref{thm:partition}):
For every source $p$ and every $\beta > 0$,
\[
Z(x) = \left(\frac{\pi}{\beta}\right)^{d/2} p(x).
\]
}}
\end{center}

This identity, proved by Fourier analysis of the BA fixed-point
equation, is exact for all temperatures and all sources.
It is not an asymptotic statement.
It does not require Gaussian assumptions.

From this identity, everything else in this paper follows by
elementary computation.

\subsection{What Flows from the Partition Identity}

Setting $g_\beta = -\nabla\log Z$, the partition identity immediately
gives $g_\beta = s := -\nabla\log p$: the BA effective score
\emph{is} the Fisher score, for all $\beta$.

Let $v = -\nabla\log q^*$ be the translation mode in the
reconstruction space---the direction generated by the translation
symmetry of $d(x,y) = \|x-y\|^2$.  Computing its projection onto
the source space via the BA posterior kernel yields
$(\mathcal{P}v)(x) = \nabla\log Z(x) = -s(x)$:
the score is the projection of the translation mode.

The Rayleigh quotient of the translation mode against the BA
relaxation kernel $\mathcal{G} = \mathcal{P}^*\mathcal{P}$ then
satisfies
\[
\mathcal{R}(v)
= \|\mathcal{P}v\|^2_{L^2(p)}
= \|s\|^2_{L^2(p)}
= J(p).
\]
Fisher information is the squared norm of the projection of the
translation mode.

The convolution identity $s_Z = \mathbb{E}[s_1(X_1)\mid Z]$,
known classically, now reads as a statement purely about the BA
effective score (which equals $s$ exactly), and the FII follows
by the same projection argument as in the classical case.
Integration along the heat flow, using de Bruijn's identity,
yields the EPI.

\subsection{Structure of the Paper}

The paper is organised in three layers, each building on the previous:

\medskip
\textbf{Layer~1 (Structural Theorems, Sections~\ref{sec:prelim}--\ref{sec:projection}):}
The BA partition identity, the equality of effective and classical
scores, and the projection formula $\mathcal{P}v = -s$.
These are purely BA-geometric statements; none involves Fisher
information or the EPI.

\medskip
\textbf{Layer~2 (Geometric Interpretation, Section~\ref{sec:Fisher}):}
The identification $J(p) = \mathcal{R}(v)$.
Fisher information is not an isolated statistical quantity but
the temperature-invariant Rayleigh quotient of the translation mode.

\medskip
\textbf{Layer~3 (Classical Consequences, Sections~\ref{sec:FII}--\ref{sec:EPI}):}
The Fisher information inequality and the entropy power inequality
as projections of the BA geometry onto classical information theory.
These sections deliberately establish known results; the contribution
is their geometric origin, not the results themselves.

% ============================================================
\section{Blahut--Arimoto Preliminaries}
\label{sec:prelim}

We work on $\R^d$ with quadratic distortion $d(x,y)=\|x-y\|^2$.
Let $p$ be a source density on $\R^d$ with finite variance and
finite differential entropy.  For $\beta > 0$, the \textbf{BA
free energy} is
\begin{equation}\label{eq:FE}
F_\beta(q) = -\frac{1}{\beta}\E_{p}
\!\left[\log\!\int e^{-\beta\|X-y\|^2}q(y)\,dy\right]
+\frac{1}{\beta}\int q(y)\log q(y)\,dy.
\end{equation}
A \textbf{BA fixed point} is any minimiser $q^* = \arg\min_q F_\beta(q)$.
The first-order optimality condition yields the self-consistency equation
\begin{equation}\label{eq:fixed_point}
q^*(y) = \int p(x)\,K^*(y|x)\,dx,
\end{equation}
where the posterior kernel is
\begin{equation}\label{eq:posterior}
K^*(y|x) = \frac{q^*(y)\,e^{-\beta\|x-y\|^2}}{Z(x)},
\qquad
Z(x) = \int q^*(u)\,e^{-\beta\|x-u\|^2}\,du.
\end{equation}
The function $Z$ is called the \textbf{partition function} of the
fixed point.  Cancelling $q^*(y)$ in~\eqref{eq:fixed_point}, the
fixed-point equation is equivalent to
\begin{equation}\label{eq:FP_integral}
\int \frac{p(u)}{Z(u)}\,e^{-\beta\|u-y\|^2}\,du = 1
\qquad\text{for all }y\in\R^d.
\end{equation}

\begin{assumption}\label{ass:regularity}
$p$ is smooth and strictly positive, admits a BA fixed point
$q^*\in\operatorname{int}\Delta$ with $q^*$ smooth and strictly
positive, and both $p, q^*$ have all moments finite.
\end{assumption}

Under Assumption~\ref{ass:regularity}, the integrals in this paper
converge and integration by parts is valid.

% ============================================================
\section{Layer~1: Structural Theorems}
\label{sec:structural}

\subsection{The BA Partition Identity}
\label{sec:partition}

\begin{theorem}[BA Partition Identity]\label{thm:partition}
Under Assumption~\ref{ass:regularity},
\begin{equation}\label{eq:partition_identity}
{Z(x) = \left(\frac{\pi}{\beta}\right)^{d/2} p(x)
\qquad\text{for all }x\in\R^d.}
\end{equation}
\end{theorem}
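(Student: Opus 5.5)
The plan is to rewrite the fixed-point equation~\eqref{eq:FP_integral} as a convolution equation and solve it by Fourier analysis. Set $r(u) := p(u)/Z(u)$ and $G(u) := e^{-\beta\|u\|^2}$. Under Assumption~\ref{ass:regularity}, $r$ is smooth and strictly positive, and \eqref{eq:FP_integral} reads
\[
(r * G)(y) = 1 \qquad \text{for all } y\in\R^d .
\]
The claim~\eqref{eq:partition_identity} is equivalent to $r \equiv (\beta/\pi)^{d/2} = \bigl(\int G\bigr)^{-1}$. The constant $c = (\int G)^{-1}$ clearly solves the equation, so the real content is uniqueness. I will show that every positive solution $r$ of $r*G=1$ is constant.

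\emph{Step 1: $r$ is a tempered distribution.} Here positivity is essential. For any $y$, restricting the integral in $r*G(y)=1$ to the unit ball $B(y,1)$ gives
\[
e^{-\beta}\int_{B(y,1)} r(u)\,du \le 1 .
\]
Hence $\sup_y \int_{B(y,1)} r \le e^{\beta}$. A positive locally integrable function with uniformly bounded mass on unit balls has polynomially bounded mass on $B(0,R)$, of order $R^d$. Therefore $r$ defines an element of $\mathcal{S}'(\R^d)$, and the convolution $r*G$ coincides with the distributional one.

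\emph{Step 2: the support of $\hat r$ is $\{0\}$.} Taking Fourier transforms gives $\hat G\,\hat r = (2\pi)^d\delta_0$ in $\mathcal{S}'$. The product is legitimate because $\hat G$ is a Gaussian, hence a Schwartz multiplier. Let $\varphi\in C_c^\infty(\R^d\setminus\{0\})$. Then $\varphi/\hat G$ is again in $C_c^\infty$, since $\hat G$ is smooth and nowhere zero. It follows that $\langle \hat r,\varphi\rangle = \langle \hat G\hat r, \varphi/\hat G\rangle = 0$. So $\operatorname{supp}\hat r\subset\{0\}$, $\hat r$ is a finite combination of derivatives of $\delta_0$, and $r$ is a polynomial $P$. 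I expect this step, together with Step~1, to be the main obstacle. One cannot naively divide by $\hat G$ globally, because $1/\hat G$ grows like $e^{\|\xi\|^2/4\beta}$ and is not a tempered multiplier. The localisation away from the origin is what avoids this.

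\emph{Step 3: the polynomial is constant.} Convolution with $G$ maps polynomials of degree $\le n$ to themselves. It is upper triangular in the monomial basis with diagonal entries $\int G \neq 0$: the leading homogeneous part of $P*G$ is $(\int G)$ times that of $P$. So $P*G=1$ forces $\deg P = 0$ and $P = (\int G)^{-1} = (\beta/\pi)^{d/2}$. Alternatively, positivity of $r$ together with the bound from Step~1 already excludes non-constant polynomials of odd degree, but the triangularity argument is cleaner.

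Finally, $p/Z = (\beta/\pi)^{d/2}$ gives $Z = (\pi/\beta)^{d/2}p$ pointwise, using the smoothness and strict positivity of $p$ and $Z$ from Assumption~\ref{ass:regularity}.
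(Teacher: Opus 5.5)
Your proof is correct and follows the same route as the paper. Both rewrite \eqref{eq:FP_integral} as $r*G=1$, take Fourier transforms to get $\operatorname{supp}\hat r\subset\{0\}$, conclude that $r$ is a polynomial, and then show it is constant. Where you differ, you are more careful: you use positivity to show $r\in\mathcal{S}'$, which the paper never checks before transforming, and you replace the paper's unjustified claim that $p/Z$ is globally bounded with the triangularity of $P\mapsto P*G$, which also gives the constant $(\beta/\pi)^{d/2}$ directly without the normalisation step.
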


\begin{proof}
The fixed-point equation~\eqref{eq:FP_integral} reads
\begin{equation}\label{eq:conv_eq}
(M * G)(y) = 1 \qquad\text{for all }y\in\R^d,
\end{equation}
where $M(u) = p(u)/Z(u)$ and $G(u) = e^{-\beta\|u\|^2}$.

Take the Fourier transform in $y$.
The constant $1$ transforms to $(2\pi)^d\delta(\xi)$.
The left side is a convolution, so by the convolution theorem:
\[
(2\pi)^d\delta(\xi) = \widehat{M}(\xi)\cdot\widehat{G}(\xi),
\qquad
\widehat{G}(\xi) = \left(\frac{\pi}{\beta}\right)^{d/2}
e^{-\|\xi\|^2/(4\beta)}.
\]
Since $\widehat{G}(\xi)\ne 0$ for all $\xi\ne 0$,
we have $\widehat{M}(\xi)=0$ for all $\xi\ne 0$.
Therefore $\widehat{M}$ is supported at the origin, hence $M$
is a polynomial.

But $M(u) = p(u)/Z(u)$ is the ratio of two smooth, strictly positive
density-like functions, implying that the polynomial $M(u)$ must be
globally bounded.  The only bounded polynomial on $\R^d$ is a constant:
$M(u) \equiv C$.

Normalising: $1 = \int p(u)\,du = C\int Z(u)\,du$.
Since
$\int Z(u)\,du = \int\!\int q^*(y)\,e^{-\beta\|u-y\|^2}dy\,du
= \int q^*(y)\,(\pi/\beta)^{d/2}\,dy = (\pi/\beta)^{d/2}$,
we get $C = (\beta/\pi)^{d/2}$, giving $Z(x) = (\pi/\beta)^{d/2} p(x)$.
\end{proof}

\begin{remark}[Novelty and scope]
Theorem~\ref{thm:partition} is exact: it requires no limit, no
Gaussian approximation, no high-temperature or low-temperature
assumption.  We have not found this identity in the rate-distortion
or optimal-transport literature; its relation to Schr\"odinger
bridge theory~\cite{Leonard2014, Nutz2021} (where partition
functions can arise in Sinkhorn iterations) is discussed in
Section~\ref{sec:discussion}.

An equivalent statement is that $q^*$ is the deconvolution of
$p$ by the Gaussian kernel $\mathcal{N}(0,\frac{1}{2\beta}I)$:
\begin{equation}\label{eq:deconvolution}
p(x) = \left(q^* * \mathcal{N}\!\left(0,\tfrac{1}{2\beta}I\right)\right)(x).
\end{equation}
In words: \emph{at the optimal rate-distortion operating point, the
source density is the convolution of the reconstruction density with
the optimal Gaussian noise level.}
This is an exact statement about the fixed-point structure, with no
approximation.
\end{remark}

\subsection{Posterior Unbiasedness}
\label{sec:posterior}

A fundamental geometric consequence of the partition identity is
that the posterior mean of the source given the reconstruction
is unbiased.

\begin{corollary}[Posterior unbiasedness]\label{cor:posterior_mean}
For any BA fixed point,
\begin{equation}\label{eq:posterior_mean}
\mathbb{E}[X \mid Y=y] = y \qquad (\forall y\in\R^d).
\end{equation}
\end{corollary}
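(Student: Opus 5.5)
The plan is to compute the conditional law of $X$ given $Y$ in closed form and read off its mean. Let the joint law of $(X,Y)$ at the fixed point be the one induced by the source and the posterior kernel~\eqref{eq:posterior}:
\[
\pi(x,y) = p(x)\,K^*(y|x) = \frac{p(x)}{Z(x)}\,q^*(y)\,e^{-\beta\|x-y\|^2}.
\]
The first step is to invoke Theorem~\ref{thm:partition}. It says that the ratio $p(x)/Z(x)$ equals the constant $(\beta/\pi)^{d/2}$ for every $x$, so the joint density becomes
\[
\pi(x,y) = q^*(y)\,\left(\frac{\beta}{\pi}\right)^{d/2} e^{-\beta\|x-y\|^2}.
\]
All dependence on $p$ has now been absorbed into $q^*$.

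Next I check that the $y$-marginal of $\pi$ is $q^*$. There are two ways to see this:
\begin{itemize}
\item integrate the last display in $x$, using $\int e^{-\beta\|x-y\|^2}\,dx = (\pi/\beta)^{d/2}$;
\item equivalently, use the self-consistency equation~\eqref{eq:fixed_point}.
\end{itemize}
Either way $\int \pi(x,y)\,dx = q^*(y)$, and $q^*(y) > 0$ by Assumption~\ref{ass:regularity}. Dividing by this marginal gives the conditional density
\[
\pi(x\mid y) = \left(\frac{\beta}{\pi}\right)^{d/2} e^{-\beta\|x-y\|^2}.
\]
This is the density of $\mathcal{N}\!\left(y,\tfrac{1}{2\beta}I\right)$. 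Its mean is $y$, and it has finite first moment, which justifies using Fubini.

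Hence $\E[X\mid Y=y] = \int x\,\pi(x\mid y)\,dx = y$ for every $y\in\R^d$, which is~\eqref{eq:posterior_mean}. The same computation also recovers the deconvolution form~\eqref{eq:deconvolution}, by integrating $\pi$ in $y$ instead of $x$.

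Once Theorem~\ref{thm:partition} is available, none of these steps is a real obstacle. All of the weight rests on the exact constancy of $p/Z$, and the rest is a Gaussian integral. The one point needing care is the regularity bookkeeping: the conditional density must be well defined for every $y$, not just almost every $y$, which strict positivity of $q^*$ guarantees.
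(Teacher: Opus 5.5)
Your proof is correct and follows the paper's argument. Both proofs substitute the partition identity $p/Z \equiv (\beta/\pi)^{d/2}$ (Theorem~\ref{thm:partition}) into the Bayes posterior, which turns it into exactly $\mathcal{N}(y,\tfrac{1}{2\beta}I)$ with mean $y$; the only cosmetic difference is that you build the joint density and check its $y$-marginal before dividing, whereas the paper writes the posterior with the normaliser $\int \frac{p(u)}{Z(u)}e^{-\beta\|u-y\|^2}\,du$ directly.
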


\begin{proof}
The posterior density of $X$ given $Y=y$ is
\[
p(x\mid y) = \frac{p(x)}{Z(x)}\,e^{-\beta\|x-y\|^2}
           \Big/ \int \frac{p(u)}{Z(u)}\,e^{-\beta\|u-y\|^2}\,du.
\]
By Theorem~\ref{thm:partition} (partition identity),
$p(x)/Z(x) = (\beta/\pi)^{d/2}$, a constant independent of $x$.
Hence the posterior is exactly a Gaussian centred at $y$:
\[
p(x\mid y) = \mathcal{N}\!\left(y, \tfrac{1}{2\beta}I\right).
\]
Its mean is $y$, giving~\eqref{eq:posterior_mean}.
\end{proof}

\begin{remark}
Corollary~\ref{cor:posterior_mean} gives a clean geometric
interpretation of the partition identity: at the BA fixed point,
the posterior distribution of the source given the reconstruction
is an isotropic Gaussian centred exactly at the reconstruction
value.  The estimation is unbiased, and the conditional variance
$\frac{1}{2\beta}I$ is independent of $y$.
\end{remark}

\subsection{The Effective Score Equals the Fisher Score}
\label{sec:score}

\begin{definition}[BA effective score]
The \textbf{BA effective score} is $g_\beta(x) := -\nabla\log Z(x)$.
\end{definition}

\begin{corollary}[Effective score identity]\label{cor:score}
$g_\beta(x) = s(x) := -\nabla\log p(x)$ for all $x$ and all $\beta > 0$.
\end{corollary}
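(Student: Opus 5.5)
The plan is to derive Corollary~\ref{cor:score} directly from Theorem~\ref{thm:partition}, without returning to the fixed-point equation. By Assumption~\ref{ass:regularity}, $p$ is smooth and strictly positive. The identity $Z(x) = (\pi/\beta)^{d/2} p(x)$ therefore shows that $Z$ is also smooth and strictly positive, so $\log Z$ and $\log p$ are well-defined smooth functions on all of $\R^d$. Taking logarithms of \eqref{eq:partition_identity} gives
\[
\log Z(x) = \tfrac{d}{2}\log\!\left(\tfrac{\pi}{\beta}\right) + \log p(x)
\qquad\text{for all }x\in\R^d.
\]
The first term is a constant that depends on $\beta$ but not on $x$.

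The second step is to take the gradient in $x$. The additive constant vanishes, so $\nabla\log Z(x) = \nabla\log p(x)$ pointwise. Multiplying by $-1$ gives $g_\beta(x) = s(x)$. Since Theorem~\ref{thm:partition} holds for every $\beta>0$, the conclusion holds for every $\beta>0$ as well. The whole $\beta$-dependence of $Z$ sits in the constant prefactor, and the logarithmic derivative removes it.

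The argument has no analytic difficulty. The one point to check is that differentiating $\log Z$ is legitimate, i.e.\ that $Z$ is differentiable and positive. This follows immediately from the identity with $p$. It can also be seen independently: $Z$ is the convolution of the strictly positive density $q^*$ with a Gaussian, so differentiation under the integral sign is justified by the finite-moment hypotheses in Assumption~\ref{ass:regularity}.

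All substantive content therefore lies in Theorem~\ref{thm:partition}, which the corollary simply inherits. Once that theorem is granted, the corollary is a one-line computation.
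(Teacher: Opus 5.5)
Your proof is correct and is essentially the paper's argument: take logarithms of the partition identity $Z = (\pi/\beta)^{d/2}p$, observe that the $\beta$-dependent prefactor becomes an additive constant, and differentiate to get $-\nabla\log Z = -\nabla\log p$. Your extra check that $Z$ is smooth and strictly positive, so that $\log Z$ can be differentiated, is a harmless addition that the paper leaves implicit.
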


\begin{proof}
From Theorem~\ref{thm:partition}, $\log Z(x) = \frac{d}{2}\log(\pi/\beta)
+ \log p(x)$, so $-\nabla\log Z(x) = -\nabla\log p(x)$.
\end{proof}

This is the conceptual bridge.
The BA effective score---defined through the optimisation geometry,
through $q^*$ and the partition function---is not an approximation
of the Fisher score; it \emph{is} the Fisher score, at every
temperature and for every source.

\subsection{The Projection of the Translation Mode}
\label{sec:projection}

The quadratic distortion $d(x,y) = \|x-y\|^2$ is invariant under
simultaneous translations $x\mapsto x+\varepsilon$, $y\mapsto y+\varepsilon$.
In the BA reconstruction space, this symmetry generates a distinguished
direction.

\begin{definition}[Translation mode]\label{def:translation_mode}
The \textbf{translation mode} is $v(y) := -\nabla\log q^*(y)$.
It satisfies $v = \nabla_\varepsilon[\log q^*(y-\varepsilon)]|_{\varepsilon=0}$,
i.e., it is the infinitesimal generator of translations of the
reconstruction density.
\end{definition}

Note that $v \in T_{q^*} = \{f : \int f(y)q^*(y)dy = 0\}$ because
$\int(-\nabla\log q^*)q^*\,dy = -\int\nabla q^*\,dy = 0$.
So $v$ lies in the tangent space on which $\mathcal{G}$ acts.

\begin{theorem}[Projection of the translation mode]\label{thm:projection}
For any BA fixed point,
\begin{equation}\label{eq:projection_id}
{(\mathcal{P}v)(x)
:= \E[v(Y)\mid X=x]
= -s(x).}
\end{equation}
The posterior projection of the translation mode equals the
(negative) Fisher score of the source.
\end{theorem}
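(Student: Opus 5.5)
The plan is to compute $\mathcal{P}v$ directly from the posterior kernel~\eqref{eq:posterior}, transfer the derivative off $q^*$ onto the Gaussian factor, and recognise the result as a logarithmic derivative of the partition function $Z$. The target intermediate identity is
\[
(\mathcal{P}v)(x) = \nabla\log Z(x),
\]
which is the form announced in the introduction. Once it holds, Corollary~\ref{cor:score} ($-\nabla\log Z = g_\beta = s$) gives $\mathcal{P}v = -s$ at once.

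The first step is to write the projection as an explicit integral. Using $K^*(y|x) = q^*(y)e^{-\beta\|x-y\|^2}/Z(x)$ and $v = -\nabla\log q^*$, the factor $q^*$ cancels against the $1/q^*$ inside $\nabla\log q^*$, so that
\[
(\mathcal{P}v)(x) = -\frac{1}{Z(x)}\int e^{-\beta\|x-y\|^2}\,\nabla q^*(y)\,dy .
\]
The second step is to integrate by parts in $y$. Under Assumption~\ref{ass:regularity}, $q^*$ is smooth with all moments finite and the Gaussian decays, so the boundary terms vanish. The derivative then falls on $e^{-\beta\|x-y\|^2}$.

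The third step is to convert the $y$-gradient into an $x$-gradient using translation invariance of the kernel, $\nabla_y e^{-\beta\|x-y\|^2} = -\nabla_x e^{-\beta\|x-y\|^2}$. This is exactly where the translation symmetry of the distortion enters. Differentiating under the integral sign, again justified by the moment assumptions, turns the integral into a multiple of $\nabla_x Z(x)$, hence of $Z(x)\nabla\log Z(x)$.

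The main obstacle is the sign bookkeeping in the second and third steps. Each contributes a factor $-1$: one from the integration by parts and one from the swap $\nabla_y\to-\nabla_x$. These must be combined with the minus sign built into $v = -\nabla\log q^*$. That orientation is the one fixed by Definition~\ref{def:translation_mode}, namely $v = \nabla_\varepsilon\log q^*(y-\varepsilon)|_{\varepsilon=0}$. I would track these three signs explicitly and line by line, without compressing them, to certify that the final orientation is $(\mathcal{P}v)(x) = \nabla\log Z(x)$. I would then invoke Corollary~\ref{cor:score} to conclude $(\mathcal{P}v)(x) = -s(x)$ for every $x$ and every $\beta>0$.

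As an independent check on the orientation, I would test the identity on the Gaussian case $p = \mathcal{N}(0,\sigma^2 I)$. By~\eqref{eq:deconvolution} this gives $q^* = \mathcal{N}(0,(\sigma^2 - \tfrac{1}{2\beta})I)$, and both sides can then be computed in closed form. I would resolve any discrepancy in favour of the explicit computation, but the general argument is expected to follow the three steps above.
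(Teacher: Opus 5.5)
You follow the paper's route: cancel $q^*$, integrate by parts, trade $\nabla_y$ for $-\nabla_x$, recognise $\nabla Z$, and invoke Corollary~\ref{cor:score}. The individual steps are sound. The gap is the one you flagged as the main obstacle, and it cannot be closed. Your own bookkeeping gives three factors of $-1$, whose product is $-1$, so the computation yields
\[
(\mathcal{P}v)(x)
= -\frac{1}{Z(x)}\int \nabla q^*(y)\,e^{-\beta\|x-y\|^2}\,dy
= \frac{1}{Z(x)}\int q^*(y)\,\nabla_y e^{-\beta\|x-y\|^2}\,dy
= -\frac{1}{Z(x)}\int q^*(y)\,\nabla_x e^{-\beta\|x-y\|^2}\,dy
= -\nabla\log Z(x),
\]
not $+\nabla\log Z(x)$. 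More directly, $\int \nabla q^*(y)\,e^{-\beta\|x-y\|^2}dy = ((\nabla q^*) * G)(x) = \nabla(q^* * G)(x) = \nabla Z(x)$.

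With Corollary~\ref{cor:score} this gives $\mathcal{P}v = g_\beta = s$. So the statement as written, $\mathcal{P}v = -s$, is false under the paper's conventions $v = -\nabla\log q^*$ and $s = -\nabla\log p$.

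Your own Gaussian test confirms this. With $\tau^2 = \sigma^2 - \tfrac{1}{2\beta}$ you have $v(y) = y/\tau^2$ and $\E[Y\mid X=x] = (\tau^2/\sigma^2)\,x$, hence $(\mathcal{P}v)(x) = x/\sigma^2 = s(x)$. This is also the classical fact you should expect. By Theorem~\ref{thm:partition} the joint law is $X = Y + N$, with $Y\sim q^*$ independent of $N\sim\mathcal{N}(0,\tfrac{1}{2\beta}I)$. The convolution rule for scores, $s_X(x) = \E[-\nabla\log q^*(Y)\mid X=x]$, is used by the paper itself with a plus sign in the proof of Theorem~\ref{thm:FII}. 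Applied here, it reads exactly $\mathcal{P}v = s$.

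The paper's proof reaches $\nabla\log Z$ only through a dropped sign in its integration-by-parts display. There, $-\int q^*(y)\,\nabla_y e^{-\beta\|x-y\|^2}dy$ equals $-2\beta\int q^*(y)(x-y)\,e^{-\beta\|x-y\|^2}dy = +\nabla_x Z(x)$. It does not equal $+2\beta\int q^*(y)(x-y)\,e^{-\beta\|x-y\|^2}dy = -\nabla_x Z(x)$, as written.

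So no line-by-line tracking can certify your intermediate target $(\mathcal{P}v)(x)=\nabla\log Z(x)$. What your three steps actually prove is the corrected theorem $\mathcal{P}v = s$, and that is what you should state and prove. The error is harmless downstream, since Theorem~\ref{thm:rayleigh} uses only $\|\mathcal{P}v\|^2_{L^2(p)} = \|s\|^2_{L^2(p)}$. Still, as a proof of the stated claim, your proposal asserts a sign it cannot establish.
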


\begin{proof}
By definition,
\begin{equation}
\begin{split}
(\mathcal{P}v)(x)
= &\int \bigl(-\nabla_y\log q^*(y)\bigr)\,K^*(y|x)\,dy\\
=& -\int \frac{\nabla q^*(y)}{q^*(y)}\cdot
\frac{q^*(y)\,e^{-\beta\|x-y\|^2}}{Z(x)}\,dy\\
=& -\frac{1}{Z(x)}\int \nabla q^*(y)\,e^{-\beta\|x-y\|^2}\,dy.
\end{split}
\end{equation}
Integrate by parts:
\begin{equation}
\begin{split}
\int\nabla q^*(y)\,e^{-\beta\|x-y\|^2}dy
=& -\int q^*(y)\nabla_y e^{-\beta\|x-y\|^2}dy\\
=& 2\beta\int q^*(y)(x-y)\,e^{-\beta\|x-y\|^2}dy\\
=& -\nabla_x Z(x).
\end{split}
\end{equation}
Therefore $(\mathcal{P}v)(x) = \nabla\log Z(x)$.
By Corollary~\ref{cor:score}, $\nabla\log Z = -s$,
completing the proof.
\end{proof}

\begin{remark}[A purely BA-geometric statement]
Theorem~\ref{thm:projection} is a statement entirely within the BA
framework: the translation mode $v$ and the posterior projection
$\mathcal{P}$ are both defined through the BA optimisation.
The Fisher score $s$ appears on the right-hand side not as an
assumption but as a consequence of the partition identity.
This theorem stands independently of the EPI and the FII.
\end{remark}

% ============================================================
\section{Layer~2: Fisher Information as a Temperature-Invariant Spectral Quantity}
\label{sec:Fisher}

\begin{theorem}[Rayleigh quotient of the translation mode]\label{thm:rayleigh}
For any BA fixed point and any $\beta > 0$,
\begin{equation}\label{eq:J_equals_R}
{J(p) = \mathcal{R}(v)
:= \langle v,\,\mathcal{G}\,v\rangle_{L^2(q^*)}.}
\end{equation}
Fisher information equals the Rayleigh quotient of the translation
mode against the BA relaxation kernel.
\end{theorem}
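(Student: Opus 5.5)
The plan is to reduce the Rayleigh quotient to a squared norm of a projection and then invoke Theorem~\ref{thm:projection}. Let $\mathcal{P}:L^2(q^*)\to L^2(p)$ be the posterior projection $(\mathcal{P}f)(x)=\E[f(Y)\mid X=x]=\int f(y)K^*(y|x)\,dy$, and let $\mathcal{G}=\mathcal{P}^*\mathcal{P}$ be the relaxation kernel, where $\mathcal{P}^*$ is the adjoint with respect to the inner products of $L^2(p)$ and $L^2(q^*)$. The first step is to identify $\mathcal{P}^*$ concretely. Using $p(x)K^*(y|x)=q^*(y)\,\frac{p(x)e^{-\beta\|x-y\|^2}}{Z(x)}$ together with the fixed-point equation~\eqref{eq:fixed_point}, we get
\[
\langle \mathcal{P}f, g\rangle_{L^2(p)}
=\iint f(y)g(x)p(x)K^*(y|x)\,dx\,dy
=\langle f, \E[g(X)\mid Y=\cdot\,]\rangle_{L^2(q^*)} .
\]
So $\mathcal{P}^*g(y)=\E[g(X)\mid Y=y]$ is the reverse conditional expectation under the joint law $p(x)K^*(y|x)$. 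By Corollary~\ref{cor:posterior_mean}, this is averaging against $\mathcal{N}(y,\tfrac{1}{2\beta}I)$. We only need the adjoint identity itself, which is just Fubini.

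The second step is the chain
\[
\mathcal{R}(v)=\langle v,\mathcal{P}^*\mathcal{P}v\rangle_{L^2(q^*)}=\|\mathcal{P}v\|_{L^2(p)}^2 .
\]
The third step is to apply Theorem~\ref{thm:projection}, which gives $\mathcal{P}v=-s$. Hence
\[
\mathcal{R}(v)=\int\|s(x)\|^2p(x)\,dx=J(p).
\]
No $\beta$ survives at any point, which gives the claimed temperature invariance.

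The main obstacle is analytic rather than algebraic. We must check that $v\in L^2(q^*)$ and $\mathcal{P}v\in L^2(p)$, so that all the pairings are finite and Fubini applies. We also need to confirm that $\mathcal{R}$ is meant unnormalised; if it were normalised, one would have to divide by $\|v\|^2_{L^2(q^*)}=J(q^*)$, and the identity as stated requires the unnormalised form.

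For integrability, I would first use Assumption~\ref{ass:regularity} (smoothness, positivity, finite moments) to justify that the integration by parts in Theorem~\ref{thm:projection} produces no boundary terms. Then Jensen's inequality for conditional expectation gives $\|\mathcal{P}v\|^2_{L^2(p)}\le\|v\|^2_{L^2(q^*)}=J(q^*)$. So finiteness of $J(q^*)$ suffices, and I would add it as an implicit part of the regularity assumption if it is not already implied.
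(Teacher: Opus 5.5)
Your proposal is correct and is essentially the paper's proof. As in the paper, $\mathcal{G}=\mathcal{P}^*\mathcal{P}$ gives $\mathcal{R}(v)=\|\mathcal{P}v\|^2_{L^2(p)}$, and Theorem~\ref{thm:projection} turns this into $J(p)$; your explicit identification of $\mathcal{P}^*$ via the fixed-point equation and the Jensen bound $\|\mathcal{P}v\|^2_{L^2(p)}\le J(q^*)$ supply integrability details the paper leaves implicit.

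Because only the norm of $\mathcal{P}v$ enters, your argument is also unaffected by a sign slip in the paper's proof of Theorem~\ref{thm:projection}. In fact $\int\nabla q^*(y)\,e^{-\beta\|x-y\|^2}\,dy=+\nabla Z(x)$, so $\mathcal{P}v=+s$. This is just the classical score identity for the backward channel $X=Y+N$ with $N\sim\mathcal{N}(0,\tfrac{1}{2\beta}I)$.
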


\begin{proof}
Since $\mathcal{G} = \mathcal{P}^*\mathcal{P}$,
\[
\mathcal{R}(v)
= \langle v,\,\mathcal{P}^*\mathcal{P}\,v\rangle_{L^2(q^*)}
= \langle \mathcal{P}v,\,\mathcal{P}v\rangle_{L^2(p)}
= \|\mathcal{P}v\|^2_{L^2(p)}.
\]
By Theorem~\ref{thm:projection}, $\mathcal{P}v = -s$, so
$\|\mathcal{P}v\|^2_{L^2(p)} = \int \|s(x)\|^2 p(x)\,dx = J(p)$.
\end{proof}

Theorem~\ref{thm:rayleigh} is the central identity of this paper.
We pause to unpack its meaning.

\paragraph{Fisher information has a new identity.}
Classically, $J(p)$ has three operational interpretations:
(i) the Cram\'er--Rao lower bound for parameter estimation;
(ii) the $L^2$-norm of the score $s = -\nabla\log p$;
(iii) the rate of entropy increase under Gaussian smoothing
(de Bruijn's identity).
Theorem~\ref{thm:rayleigh} adds a fourth:
\begin{center}
\emph{Fisher information is the Rayleigh quotient of the
translation mode---the charge associated with the translation
symmetry of the quadratic distortion---in the spectral geometry
of BA rate-distortion optimisation.}
\end{center}

\paragraph{Temperature invariance.}
The right-hand side of~\eqref{eq:J_equals_R} involves $q^*$
(which depends on $\beta$) and $\mathcal{G}$ (which also depends
on $\beta$), yet the result is independent of $\beta$.
Fisher information is a \textbf{temperature-invariant spectral
quantity}: as temperature varies, the geometry changes, but
the translation mode tracks these changes in exactly the right
way to keep $\mathcal{R}(v)$ constant.

This invariance is not obvious from the classical definition
$J(p) = \int\|s\|^2 p$, where $\beta$ does not appear.
It becomes visible only in the BA framework, where $J(p)$ is
identified as a geometric invariant.

\paragraph{Relation to the spectral gap.}
Wang~\cite{Wang2026BA} proved that the spectral gap of the BA
relaxation kernel satisfies $\lambda_* \le \mathcal{R}(v)/\|v\|^2_{L^2(q^*)}$.
Combined with Theorem~\ref{thm:rayleigh}:
\begin{equation}\label{eq:gap_bound}
\lambda_* \le \frac{J(p)}{\|v\|^2_{L^2(q^*)}}.
\end{equation}
For Gaussian sources, the translation mode is an exact eigenfunction
of $\mathcal{G}$ with eigenvalue $\lambda_*$, and the
finite-temperature de Bruijn identity $\lambda_* = J(p)/(2\beta)$
established in~\cite{Wang2026FTdB} follows because
$\|v\|^2_{L^2(q^*)} = 2\beta$ for Gaussians.

\paragraph{The logical chain.}
From a single structural object---the translation mode $v$---
we can now trace the following chain:
\[
\text{Translation symmetry}
\;\longrightarrow\;
v = -\nabla\log q^*
\;\longrightarrow\;
\mathcal{P}v = -s
\;\longrightarrow\;
\mathcal{R}(v) = J(p)
\;\longrightarrow\;
\text{FII}
\;\longrightarrow\;
\text{EPI.}
\]
Each arrow is an elementary computation.
The entropy power inequality is the terminal projection of this
chain onto classical information theory.

% ============================================================
\section{Layer~3A: The Fisher Information Inequality}
\label{sec:FII}

Let $X_1, X_2$ be independent sources with densities $p_1, p_2$
and Fisher information $J_1, J_2$.  Let $Z = X_1 + X_2$
with density $p_Z = p_1 * p_2$ and Fisher information $J_Z$.

The classical FII
\begin{equation}\label{eq:FII}
\frac{1}{J_Z} \ge \frac{1}{J_1} + \frac{1}{J_2}
\end{equation}
is the score-projection inequality~\cite{Blachman1965}:
$s_Z(z) = \E[s_1(X_1)\mid Z=z]$ combined with the contractive
property of conditional expectation.
We now see this as a statement about the BA translation mode.

\begin{theorem}[BA formulation of the FII]\label{thm:FII}
For independent sources $X_1, X_2$ with BA fixed points at any
$\beta > 0$, the translation-mode Rayleigh quotients satisfy
\begin{equation}\label{eq:BA_FII}
\frac{1}{\mathcal{R}(v_Z)} \ge \frac{1}{\mathcal{R}(v_1)} + \frac{1}{\mathcal{R}(v_2)}.
\end{equation}
Since $\mathcal{R}(v) = J(p)$ by Theorem~\ref{thm:rayleigh},
this is the classical FII~\eqref{eq:FII}.
\end{theorem}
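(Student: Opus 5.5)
The plan is to reduce \eqref{eq:BA_FII} to the classical inequality \eqref{eq:FII} using Theorem~\ref{thm:rayleigh}, and then prove \eqref{eq:FII} by Blachman's score-projection argument. The hypotheses say that each of $p_1$, $p_2$ and $p_Z=p_1*p_2$ admits a BA fixed point at the given $\beta$ satisfying Assumption~\ref{ass:regularity}. So Theorem~\ref{thm:rayleigh} applies three times and gives $\mathcal{R}(v_1)=J_1$, $\mathcal{R}(v_2)=J_2$ and $\mathcal{R}(v_Z)=J_Z$. Each of these is independent of $\beta$, so it does not matter that the three fixed points (and hence the three kernels $\mathcal{G}$) are distinct objects. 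This substitution turns \eqref{eq:BA_FII} literally into \eqref{eq:FII}, and what remains is a statement about densities alone.

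For \eqref{eq:FII} I would first establish the convolution score identity. Differentiate $p_Z(z)=\int p_1(z-x_2)\,p_2(x_2)\,dx_2$ under the integral sign, which is justified by the smoothness and moment conditions. Dividing by $p_Z(z)$ gives
\[
s_Z(z)=\E[s_1(X_1)\mid Z=z],
\]
and, by symmetry, also $s_Z(z)=\E[s_2(X_2)\mid Z=z]$. Hence, for any $\lambda\in[0,1]$,
\[
s_Z(Z)=\E\bigl[\lambda s_1(X_1)+(1-\lambda)s_2(X_2)\,\big|\,Z\bigr].
\]
In the language of Theorem~\ref{thm:projection}, this says that $s_Z$ is the conditional-expectation projection of a combination of the two source scores. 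By Theorem~\ref{thm:projection}, each source score is itself $-\mathcal{P}v_i$, which is the geometric reading the theorem intends.

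Next I would use the contraction of conditional expectation in $L^2$ (Jensen's inequality) to get
\[
J_Z\le \E\|\lambda s_1(X_1)+(1-\lambda)s_2(X_2)\|^2=\lambda^2J_1+(1-\lambda)^2J_2 .
\]
The cross term vanishes because $X_1$ and $X_2$ are independent and $\E[s_i(X_i)]=-\int\nabla p_i=0$, the same zero-mean computation used to show $v\in T_{q^*}$. Minimising over $\lambda$ gives the optimum $\lambda=J_2/(J_1+J_2)$, and the bound becomes $J_Z\le J_1J_2/(J_1+J_2)$, which is equivalent to \eqref{eq:FII}.

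The main obstacle is not the algebra but the legitimacy of the reduction. One must ensure that $p_Z$ genuinely admits a BA fixed point satisfying Assumption~\ref{ass:regularity}, since Theorem~\ref{thm:rayleigh} for $v_Z$ depends on it. One must also justify differentiating under the integral in the convolution score identity and check that $J_1,J_2<\infty$. I would treat the existence of a fixed point for $p_Z$ as part of the theorem's hypotheses (``with BA fixed points''). I would handle the interchange of derivative and integral by dominated convergence, using smoothness together with the finite moments of Assumption~\ref{ass:regularity}.
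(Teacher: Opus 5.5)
Your proposal is correct and is essentially the paper's proof. Both identify $\mathcal{R}(v)$ with $J(p)$ via Theorem~\ref{thm:rayleigh} and then run Blachman's argument: the convolution score identity $s_Z=\E[s_i(X_i)\mid Z]$, $L^2$-contraction of conditional expectation applied to $a s_1+b s_2$ with $a+b=1$, a vanishing cross term by independence and zero-mean scores, and optimisation of the weights. Your explicit remark that $p_Z$ must itself admit a regular fixed point is a hypothesis the paper uses silently when writing $J_Z=\mathcal{R}(v_Z)$. It is essentially automatic here: $p_i=q_i^{*}*\mathcal{N}(0,\tfrac{1}{2\beta}I)$ gives $p_Z=\bigl(q_1^{*}*q_2^{*}*\mathcal{N}(0,\tfrac{1}{2\beta}I)\bigr)*\mathcal{N}(0,\tfrac{1}{2\beta}I)$, so the bracketed density solves \eqref{eq:FP_integral} for $p_Z$.
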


\begin{proof}
Let $\mathcal{H} = L^2(\Omega,\mathbb{P})$ for the joint law of
$(X_1, X_2)$.  Define random vectors
$g_i = g_{\beta}(X_i) = s_i(X_i)$ (using Corollary~\ref{cor:score})
and $g_Z = s_Z(Z)$.

The classical convolution identity for scores states
$g_Z = \E[g_1\mid Z] = \E[g_2\mid Z]$.
This holds because $p_Z = p_1 * p_2$ implies
$-\nabla\log p_Z(z)= \E[-\nabla\log p_i(X_i)\mid Z=z]$.
Since $g_i = s_i$ exactly (Corollary~\ref{cor:score}), this
is simultaneously a statement about the BA effective scores.

For any $a, b$ with $a + b = 1$, set $w = ag_1 + bg_2$.
Then $g_Z = \E[w\mid Z]$, and since conditional expectation is
an orthogonal projection in $\mathcal{H}$:
\[
\|g_Z\|^2_{\mathcal{H}} \le \|w\|^2_{\mathcal{H}}
= a^2 J_1 + b^2 J_2
\]
(using independence: $\E[g_1\cdot g_2] = \E[g_1]\E[g_2] = 0$
since scores have zero mean).
Also $\|g_Z\|^2_{\mathcal{H}} = \E[\|s_Z(Z)\|^2] = J_Z =
\mathcal{R}(v_Z)$.
Minimising over $a+b=1$ gives
$J_Z \le J_1 J_2/(J_1+J_2)$,
i.e.\ $1/J_Z \ge 1/J_1 + 1/J_2$.
\end{proof}

\begin{remark}[What the BA language adds]
Numerically, Theorem~\ref{thm:FII} is the classical FII.
The BA language adds the following: the parallel-combination law
is a consequence of the fact that (i) $g_\beta = s$ (the effective
score is the classical score, by the partition identity), and
(ii) $J(p) = \mathcal{R}(v)$ (Fisher information is a Rayleigh
quotient).  The underlying reason that Fisher information combines
in parallel is geometric: the translation mode, under convolution of
sources, projects via conditional expectation---an orthogonal
projection in the Hilbert space $L^2(\Omega,\mathbb{P})$.  In this
projection, the squared norm of a convex combination of the
individual scores bounds the squared norm of the projected
convolution score, yielding the Pythagorean inequality that
underlies the parallel-combination form
$\frac{1}{J_Z} \ge \frac{1}{J_1} + \frac{1}{J_2}$.
\end{remark}

% ============================================================
\section{Layer~3B: The Entropy Power Inequality}
\label{sec:EPI}

\begin{theorem}[EPI via BA geometry]\label{thm:EPI}
For independent $X_1, X_2$ with smooth densities and finite
differential entropies,
\[
e^{\frac{2}{d}h(X_1+X_2)} \ge e^{\frac{2}{d}h(X_1)}
+ e^{\frac{2}{d}h(X_2)}.
\]
\end{theorem}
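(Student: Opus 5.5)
The plan is the Stam--Blachman heat-flow argument, with the Fisher information inequality supplied by Theorem~\ref{thm:FII}. The BA structure enters only through that input, via Theorem~\ref{thm:rayleigh}.

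For $t\ge 0$, let $X_i^{(t)} = X_i + \sqrt{t}\,W_i$, where $W_1,W_2$ are independent standard Gaussians, independent of $(X_1,X_2)$. Set $Z^{(t)} = X_1^{(t)}+X_2^{(t)}$, which has the law of $X_1+X_2+\sqrt{2t}\,W$. For $t>0$ all three densities are smooth and strictly positive with finite moments, so Assumption~\ref{ass:regularity} is plausible for them. We then invoke Theorem~\ref{thm:FII} at some fixed $\beta$, which gives $1/J_Z(t) \ge 1/J_1(t) + 1/J_2(t)$.

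The cleanest route is the Dembo--Cover--Thomas normalised scheme. Instead of a common time $t$, we use time changes $f_1(t), f_2(t)$ with $f_1'(t) = N(X_1^{(f_1)})$ and $f_2'(t) = N(X_2^{(f_2)})$, where $N = \frac{1}{2\pi e}e^{\frac{2}{d}h}$. Along the flow define
\[
s(t) = \frac{N(X_1^{(f_1)}) + N(X_2^{(f_2)})}{N(Z^{(f_1+f_2)})}.
\]
De Bruijn's identity gives $\frac{d}{dt}h(X+\sqrt{\tau}W) = \frac12 J$ in $\tau$, hence $\frac{d}{d\tau}N = \frac{2}{d}NJ\cdot\frac12$.

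We then differentiate $s(t)$. The sign of $s'(t)$ reduces to the inequality
\[
(N_1+N_2)\,(J_1 N_1 + J_2 N_2) \ge (N_1+N_2)^2\,J_Z .
\]
This follows from $J_Z \le J_1J_2/(J_1+J_2)$ together with Cauchy--Schwarz, in the form $(N_1J_1+N_2J_2)(1/J_1+1/J_2)\ge (N_1+N_2)^2$. Hence $s$ is nonincreasing in $t$. As $t\to\infty$, all three variables become asymptotically Gaussian after scaling, and the scaled entropy powers satisfy $s(t)\to 1$. Monotonicity then gives $s(0)\le 1$, which is the EPI.

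The main obstacle is the limiting step $s(t)\to1$, together with rigorous justification of de Bruijn's identity and of differentiation at $t\downarrow 0$. For the limit, I would write $N(X+\sqrt{\tau}W)/\tau \to 1$ as $\tau\to\infty$. This uses entropy bounds from finite variance: $h$ is at most the Gaussian entropy with covariance $\mathrm{Cov}(X)+\tau I$, and at least $h(\sqrt{\tau}W)$. For continuity at $t=0$, I would use upper semicontinuity of entropy under Gaussian smoothing given finite second moments, plus monotonicity $h(X+\sqrt{\tau}W)\ge h(X)$. The BA-specific hypothesis (existence of fixed points for smoothed sources) is needed only for $t>0$, where smoothing supplies the regularity.
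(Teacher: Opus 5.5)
Your strategy, Stam's time-changed heat flow with Theorem~\ref{thm:FII} as the only input, is the right one. However, the central monotonicity step is wrong as written, in three linked places. With $f_i'=N_i$, the sum must be smoothed at level $f_1+f_2$. So you need $Z^{(\tau)}=X_1+X_2+\sqrt{\tau}\,W$, not your earlier $\sqrt{2t}$ convention; otherwise the FII does not apply to the triple $X_1^{(f_1)},X_2^{(f_2)},Z^{(f_1+f_2)}$. De Bruijn then gives $\dot N_i=N_i^2J_i/d$ and $\dot N_Z=(N_1+N_2)N_ZJ_Z/d$, hence
\[
s'(t)=\frac{1}{d\,N_Z}\Bigl[N_1^2J_1+N_2^2J_2-(N_1+N_2)^2J_Z\Bigr].
\]
First, your reduced inequality $(N_1+N_2)(N_1J_1+N_2J_2)\ge(N_1+N_2)^2J_Z$ has lost a factor $N_i$ in each term. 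It follows trivially from $J_Z\le\min(J_1,J_2)$, which signals that the time change has dropped out of the computation. Second, the Cauchy--Schwarz inequality you quote, $(N_1J_1+N_2J_2)(1/J_1+1/J_2)\ge(N_1+N_2)^2$, is false. It is not homogeneous in $N$: for $N_1=N_2=N$, $J_1=J_2=J$ it reads $4N\ge4N^2$. The correct form is $(N_1^2J_1+N_2^2J_2)(1/J_1+1/J_2)\ge(N_1+N_2)^2$. More directly, the proof of Theorem~\ref{thm:FII} already gives $J_Z\le a^2J_1+b^2J_2$ for $a+b=1$, and taking $a=N_1/(N_1+N_2)$ yields $s'\ge0$. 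Third, $s$ is therefore nondecreasing, not nonincreasing, and this is not cosmetic. ``$s$ nonincreasing with $s(t)\to1$'' gives $s(0)\ge1$, i.e.\ $N(X_1)+N(X_2)\ge N(X_1+X_2)$, which is the reverse of the EPI. With $s$ nondecreasing, $s(0)\le\lim_{t\to\infty}s(t)=1$ is the EPI. For that limit you also need $f_i(t)\to\infty$, which follows from $f_i'\ge N(X_i)>0$. Your bounds $\tau\le N(X+\sqrt{\tau}\,W)\le\det(\operatorname{Cov}X+\tau I)^{1/d}$ then give $s(t)\to1$, and your continuity argument at $t=0$ is fine.

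Two smaller points. First, smoothing does not by itself deliver Assumption~\ref{ass:regularity} at a fixed $\beta$. By Theorem~\ref{thm:partition}, an interior fixed point forces $p=q^* * \mathcal{N}(0,\tfrac{1}{2\beta}I)$. For $X_i+\sqrt{\tau}\,W$ this is guaranteed (with a smooth positive $q^*$) only when $\tau>1/(2\beta)$, and Gaussian smoothing creates no moments that $X_i$ lacks. Since $f_i(t)\to0$ as $t\downarrow0$, no single $\beta$ covers the flow. You can let $\beta$ depend on $t$, which is harmless because the conclusion of Theorem~\ref{thm:FII} is $\beta$-free. Alternatively, observe that its proof uses only the classical identity $s_Z=\E[s_1(X_1)\mid Z]$.

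Second, compare with the paper's own proof. It runs the flow at a common time with no time change and asserts $\frac{d}{dt}\log N_Z\ge\frac{d}{dt}\log(N_1+N_2)$ ``by the FII''. But the FII bounds $J_Z$ from above, so it cannot supply that lower bound. Even granted, a nondecreasing ratio $N_Z/(N_1+N_2)$ would bound its initial value from above, not below. The paper's endpoint $N_i(t)\to2\pi e\,\sigma_i^2$ is also wrong, since $N_i(t)$ grows linearly in $t$. The time change $f_i'=N_i$ is exactly what lets the weighted form $J_Z\le a^2J_1+b^2J_2$ close the argument, and your normalised limit is the correct endpoint. Once the three errors above are repaired, your route is a sound version of what the paper's proof is aiming at.
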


\begin{proof}
We use de Bruijn's identity and the FII (Theorem~\ref{thm:FII}).
De Bruijn's identity~\cite{Stam1959} states that for
$X_{i,t} = X_i + \sqrt{t}Z$ with $Z\sim\mathcal{N}(0,I)$:
\[
\frac{d}{dt}h(X_{i,t}) = \frac{1}{2}J(X_{i,t}).
\]
(In the BA framework, this identity arises as the $\beta\to\infty$
limit of the $\chi^2$-dissipation identity
$\dot{F}_\beta = -\chi^2(\mathcal{T}q\|q)$ established
in~\cite{Wang2026BA}, or from the finite-temperature de Bruijn
identity $\beta\,\partial F_\beta/\partial\sigma^2 = J/2$
of~\cite{Wang2026FTdB}.
We invoke it here as a known result without re-derivation.)

The FII at each $t\ge 0$ gives
$J_Z(t)^{-1} \ge J_1(t)^{-1} + J_2(t)^{-1}$.
Setting $N_i(t) = e^{2h(X_{i,t})/d}$, de Bruijn gives
$\dot{N}_i = N_i J_i/d$, and the standard computation:
\[
\frac{d}{dt}\log N_Z(t) = \frac{J_Z(t)}{d}
\ge \frac{N_1(t)J_1(t) + N_2(t)J_2(t)}{d(N_1(t)+N_2(t))}
= \frac{d}{dt}\log(N_1(t)+N_2(t)),
\]
where the inequality uses the FII in the form
$J_Z^{-1}\ge J_1^{-1}+J_2^{-1}$, which implies $J_Z\le J_1J_2/(J_1+J_2)$.
Integrating from $t=0$ to $t=\infty$ and using
$N_i(t)\to 2\pi e\,\sigma^2_i$ (Gaussianization at $t\to\infty$,
where $N_i(t)/N_j(t)\to 1$ by the CLT for entropies) yields
$N_Z(0) \ge N_1(0) + N_2(0)$, which is the EPI.
\end{proof}

\begin{remark}[De Bruijn as the final link]
The proof of Theorem~\ref{thm:EPI} uses de Bruijn's identity as a
given.  In the BA framework, this identity has a natural
finite-temperature extension~\cite{Wang2026FTdB}, which provides
an alternative formulation of the integration argument entirely
within the BA free-energy framework.  We do not develop this
here; the point is that the EPI can be reached from BA geometry
via either the classical de Bruijn route or a BA-internal route.
\end{remark}

% ============================================================
\section{Discussion}
\label{sec:discussion}

\subsection{Summary: One Identity, Three Layers}

The entire paper unfolds from the BA partition identity
$Z = (\pi/\beta)^{d/2} p$, proved by Fourier analysis of
the fixed-point equation (Theorem~\ref{thm:partition}).
The logical structure is:

\medskip\noindent
\textit{Layer~1 (structural).}
The partition identity gives the score equality $g_\beta = s$
(Corollary~\ref{cor:score}) and the projection formula
$\mathcal{P}v = -s$ (Theorem~\ref{thm:projection}).
These are purely geometric statements with no a priori connection
to information inequalities.

\medskip\noindent
\textit{Layer~2 (interpretive).}
The projection formula immediately yields $J(p) = \mathcal{R}(v)$
(Theorem~\ref{thm:rayleigh}).
Fisher information is the temperature-invariant Rayleigh quotient
of the translation mode in BA geometry.
This is the central conceptual contribution.

\medskip\noindent
\textit{Layer~3 (classical shadow).}
The score equality and the projection structure together recover
the FII and the EPI.
These are known results; the contribution is that they are
now seen as consequences of a variational geometric identity.

The chain is:
\[
\underbrace{Z = \left(\tfrac{\pi}{\beta}\right)^{d/2}p}_{\text{Fourier analysis}}
\;\Rightarrow\;
\underbrace{g_\beta = s,\;\;\mathcal{P}v = -s}_{\text{BA geometry}}
\;\Rightarrow\;
\underbrace{J(p) = \mathcal{R}(v)}_{\text{new interpretation}}
\;\Rightarrow\;
\underbrace{\text{FII} \Rightarrow \text{EPI}}_{\text{classical consequences.}}
\]

\subsection{What the BA Framework Adds beyond the Classical Proof}

The classical proof of the EPI is both complete and elegant.
The present framework does not improve it; it recontextualises it.

The added value is structural:
\begin{enumerate}
\item \textit{Fisher information has a geometric origin.}
It is the Rayleigh quotient $\mathcal{R}(v) = \langle v,\mathcal{G}v\rangle$,
where $v$ is the mode generated by translation symmetry.
In physical language, $J(p)$ is the translation charge of the
BA free-energy landscape.

\item \textit{The parallel-combination law has a geometric reason.}
It arises because (a) $\mathcal{P}v = -s$ (projection formula),
and (b) conditional expectation is an orthogonal projection.
The FII is the Pythagoras inequality applied to a geometrically
distinguished direction.

\item \textit{Temperature invariance of Fisher information.}
The identity $\mathcal{R}(v) = J(p)$ is independent of $\beta$,
while all other objects ($q^*$, $\mathcal{G}$, $v$) depend on
temperature.  Fisher information is revealed as an invariant
of the BA geometry, not merely a property of $p$ in isolation.

\item \textit{Unification.}
Rate-distortion theory and information inequalities (EPI, FII,
de Bruijn) are shown to share a common geometric root in thepartition function $Z$ and its logarithmic gradient.
\end{enumerate}

\subsection{Relation to Schr\"{o}dinger Bridges and Optimal Transport}

The partition identity $Z \propto p$ has a structural parallel in
the theory of Schr\"odinger bridges~\cite{Leonard2014, Nutz2021}.
In the Schr\"odinger problem, one seeks a coupling of two
marginals $\mu$ and $\nu$ with minimum relative entropy with
respect to a reference process.  The optimal coupling is described
by Schr\"odinger potentials $f, g$ satisfying the Sinkhorn system
$f(x)\int e^{c(x,y)/\epsilon}g(y)d\nu(y) = 1$.

For the Gaussian kernel $e^{-\beta\|x-y\|^2}$ and equal marginals
$\mu = \nu = p$, this system reduces to
$\int e^{-\beta\|x-y\|^2}(fg)(y)p(y)dy = 1/f(x)$,
which at the symmetric fixed point $f = g$ is precisely
the BA equation~\eqref{eq:FP_integral} with $p/Z \equiv C$.
Whether the partition identity $Z \propto p$, or a version thereof,
appears in the Schr\"odinger literature with quadratic cost
and matched marginals is an interesting question we leave open.

\subsection{Conclusion}

Fisher information, a quantity defined since the 1920s through the
score function, acquires a new identity through the lens of
rate-distortion optimisation: it is the Rayleigh quotient of the
translation mode, the spectral charge of translation symmetry in
BA geometry.

This identification is temperature-invariant: as the inverse
temperature $\beta$ varies from $0$ to $\infty$, the optimal
reconstruction $q^*$, the relaxation kernel $\mathcal{G}$, and
the translation mode $v$ all change, but their combination
$\langle v, \mathcal{G}v\rangle$ remains fixed at $J(p)$.

The entropy power inequality, from this perspective, follows from
the fact that the translation charge of a convolution source is
bounded below by the harmonic sum of the individual translation
charges---a geometric fact captured by the Fisher information
inequality and propagated to entropies via the heat flow.

\section*{Acknowledgements}
This research received no formal funding and was conducted based on
the author's independent academic interest.

% ============================================================
\bibliographystyle{plainnat}

\end{document}